\documentclass[10pt,twocolumn,twoside]{IEEEtran} 
\usepackage{url}
\usepackage{mathrsfs}


\hyphenation{op-tical net-works semi-conduc-tor}
\usepackage{graphics} 
\usepackage{epsfig} 
\usepackage{mathptmx} 
\usepackage{times} 
\usepackage{amsmath} 
\usepackage{amsthm}   
\usepackage{amssymb}  
\usepackage{graphicx}
\usepackage{epstopdf}
\usepackage{bm}
\usepackage{makecell}
\usepackage{booktabs}
\usepackage{multirow}
\usepackage{threeparttable}
\usepackage{enumerate}
\usepackage{subfig}
\usepackage[noadjust]{cite}
\usepackage{multirow}
\usepackage[english]{babel}
\usepackage[autostyle]{csquotes}
\usepackage{xcolor}
\usepackage{slashbox}
\usepackage{lipsum}

\newtheorem{Remark}{Remark}

\newtheorem{Lemma}{Lemma}
\newtheorem{Theorem}{Theorem}
\newtheorem{Corollary}{Corollary}
\begin{document}
%
\title{Global Synchronization of Pulse-Coupled Oscillator Networks Under Byzantine Attacks}
%
%
%

\author{Zhenqian~Wang,~\IEEEmembership{Member,~IEEE,} and
	Yongqiang~Wang,~\IEEEmembership{Senior Member,~IEEE}
	\thanks{Zhenqian Wang and Yongqiang Wang are with the Department of Electrical and Computer Engineering, Clemson University, Clemson, SC, 29630 USA. e-mail:~\{zhenqiw,yongqiw\}@clemson.edu. The work was supported in part by the National Science Foundation under Grants 1738902, 1912702, and in part by the China Scholarship Council.}}

%
%

\markboth{}%
{}
\maketitle
\newcommand\blfootnote[1]{%
\begingroup
\renewcommand\thefootnote{}\footnote{#1}%
\addtocounter{footnote}{-1}%
\endgroup}
\begin{abstract}
	
Synchronization of pulse-coupled oscillators (PCOs) has gained significant attention recently due to their increased applications in sensor networks and wireless communications. Given the distributed and unattended nature of wireless sensor networks, it is imperative to enhance the resilience of PCO synchronization against malicious attacks. However, most existing results on attack-resilient pulse-based synchronization are obtained under assumptions of all-to-all coupling topologies or restricted initial phase distributions. In this paper, we propose a new pulse-based synchronization mechanism to improve the attack resilience of PCO synchronization that is applicable to non-all-to-all networks. Under the proposed synchronization mechanism, we prove that perfect synchronization of legitimate oscillators can be guaranteed in the presence of multiple Byzantine attackers who can emit attack pulses arbitrarily without any constraint except that practical bit rate constraint renders the number of pulses from an attacker to be finite. The new mechanism can guarantee synchronization even when the initial phases of all legitimate oscillators are arbitrarily distributed in the entire oscillation period, which is in distinct difference from most existing attack-resilient synchronization approaches (including the seminal paper from Lamport and Melliar-Smith \cite{lamport1985synchronizing}) that require a priori (almost) synchronization among legitimate oscillators. Numerical simulation results are given to confirm the theoretical results.
	
\end{abstract}
\begin{IEEEkeywords}
Pulse-Coupled Oscillators, Global Synchronization, Byzantine Attacks.
\end{IEEEkeywords}

\section{Introduction}

Inspired by flashing fireflies and contracting cardiac cells, pulse-based synchronization is attracting increased attention in sensor networks and wireless communications \cite{mirollo1990synchronization,peskin1975mathematical,mathar1996pulse,simeone2008distributed}. By exchanging simple and identical messages (so-called pulses), pulse-based synchronization incurs much less energy consumption and communication overhead compared with conventional packet-based synchronization approaches \cite{pagliari2011scalable}. These inherent advantages make pulse-based synchronization extremely appealing for event coordination and clock synchronization in various networks \cite{werner2005firefly,hong2005scalable,hu2006scalability,leidenfrost2009firefly,nunez2017pulse}. In the past decade, plenty of results have been reported on pulse-based synchronization. For example, by optimizing the interaction function, i.e., phase response function, the synchronization speed of pulse-coupled oscillators (PCOs) is maximized in \cite{wang_tsp2:12}; with a judiciously-added refractory period in the phase response function, the energy consumption of PCO synchronization is reduced in \cite{konishi:08,okuda2011experimental,wang_tsp:12}; \cite{wang_tsp:13,nunez2015synchronization,nunez2016synchronization} show that PCOs can achieve synchronization under a general coupling topology even when their initial phases are randomly distributed in the entire oscillation period. Recently, synchronization of PCOs in the presence of time-delays and unreliable links is also discussed \cite{klinglmayr2012guaranteeing,klinglmayr2017convergence}. Other relevant results include \cite{canavier2010pulse,nishimura2011robust,nishimura2012probabilistic,lucken2012two,nunez2015global,proskurnikov2016synchronization,kannapan2016synchronization,lyu2018global,gao2018pulse}.


However, all the above results are obtained under the assumption that all oscillators behave correctly with no nodes compromised by malicious attackers. Due to the distributed and unattended nature, wireless sensor nodes are extremely vulnerable to attacks, making it imperative to study synchronization in the presence of attacks. Recently, some results have emerged for attack-resilient pulse synchronization \cite{tyrrell2010does,klinglmayr2012self,yun2015robustness,wang2018attack,wang2018pulse,wang2020attack,khanchandani2016self,dolev2004self,daliot2003linear,ben2008fast,daliot2003self,dolev2007byzantine,leidenfrost2010fault}. In \cite{tyrrell2010does}, the authors showed that pulse-based synchronization is more robust than its packet-based counterpart in the presence of a faulty node. In \cite{klinglmayr2012self}, a new phase response function was proposed to combat non-persistent random attacks. The authors in \cite{yun2015robustness} considered pulse-based synchronization in the presence of faulty nodes which fire periodically ignoring neighboring nodes' influence. However, all the above results only apply to a priori synchronized PCO network, i.e., all legitimate nodes are required to have identical phases when faulty pulses are emitted. Furthermore, these results also require that the communication topology of legitimate oscillators is all-to-all. 

To relax the constraint on initial phase distributions, \cite{wang2018attack} proposed a pulse-based synchronization approach that is applicable even when legitimate oscillators have different but restricted initial phases; \cite{wang2018pulse} further proposed a pulse-based synchronization mechanism that can achieve synchronization under stealthy attacks even when the initial phases of legitimate oscillators are randomly distributed in the entire oscillation period (global synchronization) under all-to-all connection. The authors in \cite{daliot2003linear,dolev2004self,ben2008fast,khanchandani2016self} proposed to achieve global synchronization by exchanging packets besides pulses. 

On the other hand, to relax the constraint on all-to-all topology, our most resent result \cite{wang2020attack} proposed a new attack resilient pulse-based interaction mechanism to synchronize non-all-to-all connected PCOs when their initial phases are restricted in a half cycle; the authors in \cite{daliot2003self,dolev2007byzantine} employ extra packet based communication to achieve global synchronization under Byzantine attacks even when the network is generally connected. Using a similar approach, the authors in \cite{leidenfrost2010fault} showed that a $(5f+1)$-connected network can achieve global synchronization in the presence of $f$ attackers with each attacker unable to send two attack pulses in one natural oscillation period. Because of the introduction of extra packet messages, these approaches have higher communication and computation overhead, which will further restrict scalability as well as achievable synchronization accuracy.  

\begin{table*}[htbp]\label{Table_Global}
\centering
\centering{Table $1$. Comparison of attack-resilient pulse synchronization approaches.}
\begin{center}
\begin{threeparttable}	
\renewcommand\arraystretch{2}
\scalebox{1}{\begin{tabular}{|c |c |c | c| c|}\hline
\multirow{2}{*}{\makecell{\backslashbox{Approaches}{Merits}}}& \multirow{2}{*}{\makecell{Unrestricted phase\\ distribution conditions}} & \multirow{2}{*}{\makecell{Not restricted to \\ all-to-all networks}} & \multirow{2}{*}{\makecell{Attack model is \\ Byzantine attacks}} & \multirow{2}{*}{\makecell{Communication \\ uses content-free\\ pulses only}}\\ 

{}& {} & {}& {} &{} \\ \hline 
					
{\makecell{\cite{klinglmayr2012self,yun2015robustness,wang2018attack}}}& {$\times$} & {$\times$} & {$\times$} & {$\surd$}\\ \hline

\makecell{\cite{wang2018pulse}}& {$\surd$} & {$\times$} & {$\times$} & {$\surd$}\\ \hline		
\makecell{\cite{wang2020attack}}& {$\times$} & {$\surd$} & {$\times$} & {$\surd$}\\ 
\hline	
\makecell{\cite{khanchandani2016self,dolev2004self,daliot2003linear,ben2008fast}}& {$\surd$} & {$\times$} & {$\surd$} & {$\times$}\\
\hline							
\makecell{\cite{daliot2003self} \cite{dolev2007byzantine}}& {$\surd$} & {$\surd$} & {$\surd$} & {$\times$}\\ 
\hline		
\makecell{\cite{leidenfrost2010fault}}& {$\surd$} & {$\surd$} & {$\times$} & {$\times$}\\ \hline			
\makecell{This paper}& {$\surd$} & {$\surd$} & {$\surd$} & {$\surd$}\\ 
\hline	
\end{tabular}}
\end{threeparttable}
\end{center}
\end{table*}

In this paper, we propose an approach to synchronizing densely connected PCO networks from an arbitrary initial phase distribution under Byzantine (arbitrary) attacks. The approach only employs content-free pulses. It is worth noting that the content-free pulse-based communication reduces the attack surface and avoids the manipulation of message contents by Byzantine attacks. In fact, what can be manipulated by Byzantine attacks becomes the timing of attack pulses, which will be elaborated in Sec. \uppercase\expandafter{\romannumeral3}.

Table $1$ summarizes the advantage of our approach over existing results on pulse-based synchronization. More specifically, compared with existing results, our contributions are as follows: 1) Under Byzantine attacks, our proposed mechanism can synchronize legitimate oscillators even when their initial phases are arbitrarily distributed in the entire oscillation period; 2) Our mechanism is applicable to densely connected PCO networks that are not necessarily all-to-all; 3) We consider an attack model that is much more difficult to deal with than existing results like \cite{klinglmayr2012self,yun2015robustness,wang2018pulse,wang2018attack,wang2020attack}; 4) Our mechanism only use contend-free pulses, which is different from \cite{daliot2003linear,dolev2004self,ben2008fast,khanchandani2016self,daliot2003self,dolev2007byzantine,leidenfrost2010fault} relying on the assistance of packet communication to achieve synchronization; 5) Our proposed mechanism guarantees that the collective oscillation period is identical to the free-running period irrespective of attacks, which is superior to existing mechanisms (e.g., \cite{yun2015robustness,wang2018pulse,wang2018attack}) that lead to a collective oscillation period affected by attacker pulses.

It is worth noting that the results in this paper are fundamentally different from our recent result \cite{wang2020attack} in the following aspects: $1)$ The attack model in this paper is much stronger. \cite{wang2020attack} considers an attack model in which an attacker is restricted to send at most one attack pulse in any time interval of length $T/2$ (to stay stealthy) whereas this paper allows attackers to send as many attack pulses as possible under a given communication channel with a fixed bit rate. So synchronization under attacks in this paper is much more challenging;  $2)$ This paper has more relaxed requirement on the initial distribution of oscillator phases compared with \cite{wang2020attack}. \cite{wang2020attack} requires legitimate oscillators to have initial phases contained in a half cycle whereas this paper allows legitimate oscillators' phases to be arbitrarily distributed in the entire cycle; $3)$ This paper proves finite-time synchronization whereas \cite{wang2020attack} only proves asymptotic synchronization even in the case of $l=1$. More specifically, \cite{wang2020attack} proves that the length of the containing arc of legitimate oscillators will decrease to no greater than $(1-l/2)$ of its original value after every two consecutive firing rounds, and hence can only yield synchronization when time goes to infinity. (It is worth noting that our prior result on non-all-to-all PCO networks in \cite{wang2018attack} needs $0<l<1$ to address the practical case of non-identical initial phases of legitimate oscillators and hence also only proves asymptotic synchronization.)

This paper is organized as follows. Sec. \uppercase\expandafter{\romannumeral2} reviews the main concepts of PCO networks. Sec. \uppercase\expandafter{\romannumeral3} presents the attack model considered in this paper. Sec. \uppercase\expandafter{\romannumeral4} presents a new pulse-based synchronization mechanism. Sec. \uppercase\expandafter{\romannumeral5} addresses the case of multiple Byzantine attackers and Sec. \uppercase\expandafter{\romannumeral6} addresses the case where the total number of oscillators is unknown to individual oscillators. Simulation results are presented in Sec. \uppercase\expandafter{\romannumeral7}.

\section{Preliminaries}
Consider a network of $N$ pulse-coupled oscillators. Each oscillator is equipped with a phase variable which evolves clockwise on a unit circle. When the evolving phase of an oscillator reaches $2\pi$ rad, the oscillator fires (emits a pulse). Receiving pulses from neighboring oscillators will lead to the adjustment of the receiving oscillator's phase, which can be designed to achieve a desired collective behavior such as phase synchronization. To define synchronization, we first introduce the concept of containing arc. The containing arc of legitimate oscillators is defined as the shortest arc on the unit circle that contains all legitimate oscillators' phases.

\emph{Definition 1 (Phase Synchronization)}: We define phase synchronization as a state on which all legitimate oscillators have identical phases and fire simultaneously with a period of $T=2\pi$ seconds.

An edge $(i,j)$ from oscillator $i$ to oscillator $j$ means that oscillator $j$ can receive pulses from oscillator $i$ but not necessarily vice versa. The number of edges entering oscillator $i$ is called the indegree of oscillator $i$ and is represented as $d^-_i$. The number of edges leaving oscillator $i$ is called the outdegree of oscillator $i$ and is represented as $d^+_i$. The value $d_i \triangleq\min\{d^-_i,d^+_i\}$ is called the degree of oscillator $i$. The degree of a network is defined as $d\triangleq\min_{i=1,2,\cdots,N}\{d_i\}$. Since an oscillator cannot receive the pulse emitted by itself, the maximal degree of a network of $N$ PCOs is $d=N-1$, meaning that the network is all-to-all connected. In this paper, we consider dense networks where the network degree $d$ is assumed to be greater than $\lfloor 2N/3 \rfloor$. Making use of the fact $d\triangleq\min_{i=1,2,\cdots,N}\{d_i\}$, we always have $d_i-\lfloor 2N/3\rfloor-1\geq 0$ for $i=1,\,2,\,\cdots,N$.

\section{Attacker Model}

In this section, we present the model of Byzantine attacks. We assume that Byzantine attacks are able to compromise an oscillator and completely take over its behavior. Since the communicated messages in PCO networks are identical and content-free, i.e., pulses, a Byzantine attacker cannot manipulate the content of pulses, but rather, it will judiciously craft attacks via injecting pulse trains at certain time instants to negatively affect pulse-based synchronization.

Because in realistic wireless sensor networks (WSNs), the bit rate of a communication channel between two connected oscillators is limited, an attacker cannot send infinitely many pulses in any finite time interval. In other words, there is always a nonzero time interval between two consecutive pulses from an attacker. Therefore, Byzantine attackers will launch attacks with a time separation greater than $\epsilon$ seconds, where $\epsilon$ is the minimum time separation between two consecutive pulses that can be conveyed by a channel. We summarize the Byzantine attacker model in this paper as follows: 

\emph{Byzantine Attacker}: a Byzantine attacker will emit attack pulses with a time separation greater than $\epsilon$ seconds, where $\epsilon$ is the minimum time separation between two pulses that can be successfully conveyed by a communication channel.

\begin{Remark}\label{Remark_4_1_Global}
In PCO networks, the communication messages are all content-free pulses. So the transmission of one pulse will only occupy the communication channel for a very short time. Only after finishing transmitting one pulse, an attacker can initiate the transmission of another attack pulse. Hence, $\epsilon$ is determined by the length of the pulse and the bit rate of the communication channel. For example, the bit rate of the IEEE 802.15.4 channel is $250kbps$. If we use a control packet (21 bytes) to realize a pulse, then transmitting such pulses will need time separation $\epsilon=(21\times8)/250000=0.672\times10^{-3}$ seconds \cite{leidenfrost2009firefly,zong2018synchronization}.
\end{Remark}

\begin{Remark}\label{Remark_4_2_Global_1}
All existing attack patterns considered under pulse-based synchronization such as random attacks \cite{tyrrell2010does,klinglmayr2012self}, static attacks  \cite{yun2015robustness}, and stealthy attacks \cite{wang2018attack,wang2018pulse,wang2020attack} are special cases of the attacker model considered in this paper.
\end{Remark}

\section{A New Pulse-Based Synchronization Mechanism}

Motivated by the fact that the conventional pulse-based synchronization mechanism is vulnerable to attacks, we propose a new pulse-based synchronization mechanism to combat attacks. To present our new mechanism, we first describe the conventional pulse-based synchronization mechanism.

\noindent\rule{9cm}{0.12em}
\emph{Conventional Pulse-Based Synchronization Mechanism \cite{yun2015robustness}:} \\
\rule{9cm}{0.1em}
\begin{enumerate}
	\item The phase $\phi_i$ of oscillator $i$ evolves from $0$ to $2\pi$ rad with a constant speed $\omega=1$ rad/second.
	
	\item Once $\phi_i$ reaches $2\pi$ rad, oscillator $i$ fires and resets its phase to $0$.
	
	\item \emph{Whenever} oscillator $i$ receives a pulse, it \emph{instantaneously} resets its phase to:
	\begin{equation}\label{PhaseJump}
	\phi_i^+=\phi_i+l\times F(\phi_i)\\
	\end{equation}
	where $l\in(0,1]$ is the coupling strength and $F(\bullet)$ is the phase response function (PRF) given below:
	\begin{equation}\label{PRF}
	\begin{array}{ccc}
	F(\phi):=\left\{
	\begin{array}{ll}
	\hspace{0.1cm} -\phi~~~~~~~~~0\leq\phi\leq\pi \\
	\hspace{0.1cm} 2\pi-\phi~~~~~\pi<\phi\leq 2\pi 
	\end{array}\right.
	\end{array}
	\end{equation}
	For $l=1$, oscillator $i$ will fire immediately if it has $\phi_i^+=2\pi$ rad.
\end{enumerate}
\rule{9cm}{0.12em}

In the above conventional pulse-based synchronization mechanism, every incoming pulse will trigger a jump on the receiving oscillator's phase, which makes it easy for attackers to perturb the phases of legitimate oscillators and hence destroy their synchronization. Moreover, we have that synchronization can never be maintained when attackers only affect part of the network, even when the coupling strength is set to $l=1$. This is because attack pulses can always exert nonzero phase shifts on affected legitimate oscillators and make them deviate from unaffected ones. This is also confirmed by numerical simulation results in Figure \ref{Attack_N_known_Compare} and Figure \ref{Attack_N_unknown_Compare}, which illustrate that existing results in \cite{yun2015robustness,wang2018attack,wang2018pulse} cannot achieve synchronization in the presence of Byzantine attacks when the topology is not all-to-all.

To overcome the inherent vulnerability of existing pulse-based synchronization approaches, we propose a new pulse-based synchronization mechanism (Mechanism $1$) to improve the attack resilience of PCO networks. Our key idea to enable attack resilience is a ``pulse response mechanism" which can restrict the number of pulses able to affect a receiving legitimate oscillator's phase in any oscillation period and a ``phase resetting mechanism" which resets the phase value of a legitimate oscillator upon reaching phase $2\pi$ rad to different values depending on the number of received pulses. The ``pulse response mechanism" and the ``phase resetting mechanism" only allow pulses meeting certain conditions to affect a receiving oscillator's phase and hence can effectively filter out attack pulses with extremely negative effects on the synchronization process. Noting that all pulses are identical and content-free, Mechanism $1$ is judiciously designed based on the number of pulses an oscillator received in the past, i.e., based on memory. The new pulse-based synchronization mechanism is detailed below:

\noindent\rule{9cm}{0.12em}
\emph{New Pulse-Based Synchronization Mechanism (Mechanism 1):}\\
\rule{9cm}{0.1em}
\begin{enumerate}
	\item The phase $\phi_i$ of legitimate oscillator $i$ evolves from $0$ to $2\pi$ rad with a constant speed $\omega=1$ rad/second.
	
	\item Once $\phi_i$ reaches $2\pi$ rad at time $t$, oscillator $i$ fires (emits a pulse) if it did not fire within $(t-\epsilon,\,t]$ and an entire period $T=2\pi$ seconds has elapsed since initiation. Then oscillator $i$ resets its phase from $2\pi$ rad to $0$ if it received over $\lfloor N/3 \rfloor$ pulses within $(t-\epsilon,\,t]$, where $\lfloor \bullet \rfloor$ is the largest integer no greater than $``\bullet."$ Otherwise, it resets its phase from $2\pi$ rad to $\pi$ rad.
	
	\item When oscillator $i$ receives a pulse at time $t'$, it shifts its phase to $2\pi$ rad only if $\phi_i\in[\pi,\,2\pi]$ at time instant $t'$ and one of the following conditions is satisfied:
	\begin{enumerate}		
	\item before receiving the current pulse, oscillator $i$ has already received at least $d_i-\lfloor 2N/3\rfloor-1$ pulses in $[t'-T/2,\,t']$ and it did not reset its phase from $2\pi$ rad to $0$ within $(t'-T,\,t')$.	
			
	\item before receiving the current pulse, oscillator $i$ has already received at least $d_i-\lfloor 2N/3\rfloor-1$ pulses in $(t'-\epsilon,\,t']$.
    \end{enumerate}
	Otherwise, the pulse has no effect on $\phi_i$ who will evolve freely towards $2\pi$ rad.
\end{enumerate}
\rule{9cm}{0.12em}

\begin{Remark}\label{Remark_4_2_Global}
Following \cite{hong2005scalable,lucken2012two,nunez2015global, proskurnikov2016synchronization}, we assume that when a legitimate oscillator receives multiple pulses simultaneously, it can determine the number of received pulses and processes them consecutively. In other words, no two pulses will be regarded as an aggregated pulse.
\end{Remark}

\section{Synchronization of PCO Networks in the Presence Attacks}

In this section, we address the synchronization of PCO networks in the presence of Byzantine attacks. Among $N$ PCOs, we assume that $M$ are compromised and act as Byzantine attackers. We will show that Mechanism $1$ synchronizes legitimate oscillators even in the presence of multiple Byzantine attackers. Specifically, we will prove that under Mechanism $1$, legitimate oscillators achieve synchronization even when their topology is non-all-to-all and their initial phases are distributed arbitrarily in the entire oscillation period $[0,\,2\pi]$. More interestingly, when synchronization is achieved, the collective oscillation period of all legitimate oscillators is invariant under attacks and is identical to the free-running oscillation period $T=2\pi$ seconds. To facilitate theoretical analysis, we first establish Lemma \ref{Lemma_4_1_Global} about the properties of floor function $\lfloor\bullet\rfloor$.

\begin{Lemma}\label{Lemma_4_1_Global}
For three arbitrary positive integers $x$, $y$, and $Q$, with $x>y$, the following inequalities always hold:	
	\begin{equation*}
	\left\{
	\begin{array}{lr}
	\lfloor y\cdot Q/x \rfloor\geq y\cdot\lfloor Q/x \rfloor  \\
	\\
	\lfloor y\cdot Q/x \rfloor+\lfloor (x-y)\cdot Q/x  \rfloor+1\geq Q
	\end{array}
	\right.
	\end{equation*}	
\end{Lemma}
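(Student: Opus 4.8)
The plan is to prove the two inequalities independently, using nothing beyond the division algorithm and the elementary identity that for any real number $a$ one has $\lfloor a\rfloor+\lfloor -a\rfloor\in\{-1,0\}$, together with the shift rule $\lfloor n+a\rfloor=n+\lfloor a\rfloor$ valid for every integer $n$.

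For the first inequality, I would begin by writing $Q$ via its Euclidean division by $x$, namely $Q=x\lfloor Q/x\rfloor+r$ with $0\le r<x$. Dividing by $x$ and multiplying by $y$ gives $yQ/x=y\lfloor Q/x\rfloor+yr/x$. Since $y\lfloor Q/x\rfloor$ is an integer, the shift rule yields $\lfloor yQ/x\rfloor=y\lfloor Q/x\rfloor+\lfloor yr/x\rfloor$. Because $y$, $r$, and $x$ are nonnegative, the residual floor term $\lfloor yr/x\rfloor$ is at least $0$, which immediately gives $\lfloor yQ/x\rfloor\ge y\lfloor Q/x\rfloor$ and closes the first case. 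Note that the hypothesis $x>y$ is not even needed here.

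For the second inequality, the key observation is that the two arguments are complementary: setting $a=yQ/x$, the second argument is $(x-y)Q/x=Q-a$. Since $Q$ is an integer, the shift rule gives $\lfloor Q-a\rfloor=Q+\lfloor -a\rfloor$, so that $\lfloor yQ/x\rfloor+\lfloor (x-y)Q/x\rfloor=Q+\bigl(\lfloor a\rfloor+\lfloor -a\rfloor\bigr)$. Invoking the identity $\lfloor a\rfloor+\lfloor -a\rfloor\ge -1$ (with equality unless $a$ is an integer) bounds the bracketed term below by $-1$, yielding $\lfloor yQ/x\rfloor+\lfloor (x-y)Q/x\rfloor\ge Q-1$, which is exactly the claimed inequality after moving the $+1$ to the left.

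I do not expect any genuine obstacle here, as both bounds reduce to standard floor-function manipulations; the only point requiring slight care is the bookkeeping of the remainder $r$ in the first part and the correct sign handling in the second part, where one must remember that $\lfloor -a\rfloor=-\lceil a\rceil$ can drop by $1$ relative to $-\lfloor a\rfloor$, which is precisely the source of the $+1$ slack in the statement. Establishing the sharp identity $\lfloor a\rfloor+\lfloor -a\rfloor\in\{-1,0\}$ first, rather than proving the two inequalities by separate case analyses on the remainders, keeps the argument short and transparent.
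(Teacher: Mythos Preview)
Your proof is correct and follows essentially the same approach as the paper: the first inequality is handled identically via Euclidean division of $Q$ by $x$, and for the second inequality both arguments exploit the complementarity $(x-y)Q/x=Q-yQ/x$, with you invoking the identity $\lfloor a\rfloor+\lfloor -a\rfloor\ge -1$ directly while the paper performs a second Euclidean division of $yQ$ by $x$ to reach the same bound. The difference is purely cosmetic.
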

\begin{proof} First, we prove $\lfloor y\cdot Q/x \rfloor\geq y\cdot\lfloor Q/x \rfloor$. Since $x$ and $Q$ are positive integers, dividing $Q$ by $x$ and letting $q$ and $r$ be the quotient and remainder, respectively, we have $Q=x\cdot q+r$ and $0\leq r/x<1$. By substituting them into $\lfloor y\cdot Q/x \rfloor-y\cdot\lfloor Q/x \rfloor$, we have:
\begin{flalign}
\lfloor y\cdot Q/x \rfloor - y\cdot\lfloor Q/x \rfloor=&\lfloor y\cdot q+ {y\cdot r}/{x} \rfloor-y\cdot\lfloor q+ {r}/{x} \rfloor\nonumber\\
=& y\cdot q +\lfloor {y\cdot r}/{x} \rfloor-y\cdot  q \nonumber\\
=&\lfloor {y\cdot r}/{x} \rfloor\geq 0.\nonumber
\end{flalign}
Hence, we obtain $\lfloor y\cdot Q/x \rfloor\geq y\cdot\lfloor Q/x \rfloor$.

Next, we proceed to prove $\lfloor y\cdot Q/x \rfloor+\lfloor (x-y)\cdot Q/x  \rfloor+1\geq Q$. Dividing $y\cdot Q$ by $x$ and letting $\bar{q}$ and $\bar{r}$ be the quotient and remainder, respectively, we have $y\cdot Q=\bar{q}\cdot x+\bar{r}$ and $0\leq \bar{r}/x<1$. Substituting them into $\lfloor y\cdot Q/x \rfloor+\lfloor (x-y)\cdot Q/x  \rfloor+1-Q$ leads to
\begin{flalign}
&\lfloor y\cdot Q/x \rfloor+\lfloor (x-y)\cdot Q/x  \rfloor+1-Q\nonumber\\
=&\lfloor \bar{q}+\bar{r}/{x} \rfloor+\lfloor Q-\bar{q}-\bar{r}/{x} \rfloor+1-Q\nonumber\\
\geq&\lfloor\bar{q}\rfloor+\lfloor Q-\bar{q}-1 \rfloor+1-Q=0.\nonumber
\end{flalign}		
Thus, we obtain $\lfloor y\cdot Q/x \rfloor+\lfloor (x-y)\cdot Q/x  \rfloor+1\geq Q$.
\end{proof}

Now we are in position to prove that all legitimate oscillators will synchronize under Mechanism $1$ in the presence of Byzantine attacks even when legitimate oscillators are under a non-all-to-all connection and the initial phases are arbitrarily distributed in the entire oscillation period $[0,\,2\pi]$.

\begin{Theorem}\label{Theorem_4_1_Global}
For a network of $N$ PCOs among which $M$ are compromised and launch attacks following the Byzantine attack model in Sec \uppercase\expandafter{\romannumeral3}, if the degree of the PCO network satisfies $d>\lfloor 2N/3\rfloor$ and the number of attackers $M$ satisfies $M<d-\lfloor 2N/3\rfloor$, then all legitimate oscillators will synchronize under Mechanism $1$ from any initial phase distribution.
\end{Theorem}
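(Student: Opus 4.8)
The plan is to establish, in order, three facts: (i) the counting thresholds built into Mechanism~$1$ let a legitimate oscillator distinguish a genuine near-simultaneous firing of its legitimate neighbours from anything the attackers can fabricate; (ii) phase synchronization is an absorbing state with collective period exactly $T=2\pi$; and (iii) it is reached in finite time from an arbitrary phase distribution. Write $L=N-M$ for the number of legitimate oscillators. First I would convert the standing hypotheses $d>\lfloor 2N/3\rfloor$ and $M<d-\lfloor 2N/3\rfloor$ into the counting facts that everything rests on. The bit-rate constraint (separation $>\epsilon$ per attacker) limits attacker pulses in any window of length $\epsilon$ to at most $M$; using Lemma~\ref{Lemma_4_1_Global} with $x=3,\,y=2,\,Q=N$ (so $\lfloor 2N/3\rfloor\ge 2\lfloor N/3\rfloor$ and $\lfloor 2N/3\rfloor+\lfloor N/3\rfloor+1\ge N$) together with $d_i^-\ge d_i\ge d$, I would show that $M\le\lfloor N/3\rfloor$ and that each legitimate oscillator $i$ has at least $d_i^--M\ge d-M>\lfloor 2N/3\rfloor$ legitimate in-neighbours. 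The two consequences I want are: a reset-to-$0$ in step~$2$ (which needs more than $\lfloor N/3\rfloor$ pulses inside an $\epsilon$-window) and a pull-up under condition~(b) (which needs at least $d_i-\lfloor 2N/3\rfloor$ pulses inside an $\epsilon$-window) can each be produced only when a genuine cluster of legitimate oscillators fires nearly simultaneously, because the at-most-$M$ attacker pulses fall strictly short of both thresholds.

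With this in hand, I would prove invariance. If at some instant all legitimate oscillators share a phase, then when it reaches $2\pi$ they fire together and every legitimate oscillator receives more than $\lfloor 2N/3\rfloor>\lfloor N/3\rfloor$ legitimate pulses in the same $\epsilon$-window, so by step~$2$ all of them reset to $0$ simultaneously; the common phase again runs freely from $0$ to $2\pi$, so the collective period is exactly $2\pi$, independent of attacks. To see the state cannot be broken, I would use the counting facts to note that an attacker burst alone meets neither condition~(a) nor~(b) in an $\epsilon$-window, and that the clause forbidding a pull-up when the oscillator reset from $2\pi$ to $0$ within $(t'-T,t')$ blocks a spurious second firing inside one period; hence a synchronized legitimate oscillator is never dislodged.

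The crux is finite-time attraction, which I would carry out by tracking the containing arc of the legitimate oscillators. The reset-to-$\pi$ rule is the folding device: any legitimate oscillator that reaches $2\pi$ without a genuine $\epsilon$-cluster (hence seeing at most $\lfloor N/3\rfloor$ pulses there) is returned to $\pi$ rather than to $0$. I would first show that attacks cannot enlarge the containing arc: a pull-up triggered purely by attackers can fire an oscillator from $[\pi,2\pi]$ but, lacking a legitimate $\epsilon$-cluster, can only reset it to $\pi$, which keeps it inside the folded half-cycle and never fakes a reset-to-$0$. Using condition~(a), whose half-cycle accumulation window lets an oscillator that has collected enough legitimate pulses be pulled forward, I would argue that after a finite transient all legitimate phases are confined to an arc of length at most $\pi$ and a common cluster forms. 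Once this cluster exceeds $\lfloor 2N/3\rfloor$ legitimate oscillators, the counting facts give every remaining legitimate oscillator in $[\pi,2\pi]$ at least $d_i-\lfloor 2N/3\rfloor$ cluster pulses in the firing $\epsilon$-window, so condition~(b) pulls all of them up to fire with the cluster, and the enlarged simultaneous firing then meets the step~$2$ threshold so that the whole legitimate population resets to $0$: full synchronization in finitely many firing rounds, after which invariance takes over.

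The main obstacle is controlling the Byzantine attackers throughout the attraction stage and proving that progress is strict and terminates. The delicate point is condition~(a): its window has length $T/2\gg\epsilon$, so an attacker can in fact exceed its count threshold and force firings; I must show this is harmless, i.e.\ that every such attacker-induced firing either folds the oscillator to $\pi$ (never expanding the arc and never reaching phase $0$) or coincides with a genuine cluster it is meant to join. Establishing that the containing arc contracts monotonically to a single point, that each firing round strictly increases the common cluster (bounded by $L$), and that the total number of attacker pulses is finite by the $\epsilon$-separation constraint — so the transient cannot be prolonged indefinitely — is where the real work lies, and the floor inequalities of Lemma~\ref{Lemma_4_1_Global} are exactly what verify the threshold bookkeeping linking the $\lfloor N/3\rfloor$ and $\lfloor 2N/3\rfloor$ counts to the $L$ legitimate oscillators.
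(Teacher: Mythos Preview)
Your counting step and your invariance argument are essentially the paper's own, but the finite-time attraction argument has a real gap. You propose to track the containing arc and show it ``contracts monotonically to a single point,'' with a ``common cluster'' that ``strictly increases'' each firing round. The paper explicitly points out (in the remark following the theorem, and in the simulations of Figures~\ref{Attack_N_known} and~\ref{Attack_N_unknown}) that under this mechanism and attack model the containing arc does \emph{not} decrease monotonically, so that route will not close. Relatedly, your termination device --- ``the total number of attacker pulses is finite by the $\epsilon$-separation constraint'' --- is false over unbounded time: each attacker can emit infinitely many pulses, just not more than one per $\epsilon$; this does not by itself bound the transient.

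What you are missing is the clause in step~2 of Mechanism~1 that a legitimate oscillator fires only after ``an entire period $T=2\pi$ seconds has elapsed since initiation.'' The paper leans on this heavily: during $[t_0,t_0+T]$ no legitimate oscillator fires at all, so every $\epsilon$-window contains at most $M<\lfloor N/3\rfloor$ pulses and every arrival at $2\pi$ resets to $\pi$ (never to $0$); hence all legitimate phases sit in $[\pi,2\pi]$ at $t_0+T$. No arc-contraction argument is needed for this. From there the paper does not wait for a cluster of size $>\lfloor 2N/3\rfloor$ to form. Instead it orders legitimate oscillators by their first firing time in $(t_0+T,t_0+3T/2]$ and looks at the firing of oscillator $\lfloor N/3\rfloor+1$ at time $t_1$: by then every legitimate $i$ has already accumulated at least $\lfloor N/3\rfloor-(N-d_i)\ge d_i-\lfloor 2N/3\rfloor-1$ pulses in a window of length $<T/2$, so condition~3(a) is armed for everyone, the single pulse at $t_1$ cascades to at least $d-M-\lfloor N/3\rfloor>\lfloor N/3\rfloor$ simultaneous firings, and that in turn pushes every legitimate oscillator to $2\pi$ with more than $\lfloor N/3\rfloor$ pulses in the $\epsilon$-window, forcing a simultaneous reset to $0$. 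This one-shot cascade at a specific labelled firing, enabled by the no-fire-before-$T$ clause, replaces the monotone-arc/cluster-growth program you sketched.
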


\begin{proof} We set the initial time instant to $t_0$. The following proof is divided into two parts. In part \uppercase\expandafter{\romannumeral1}, we prove that all $N-M$ legitimate oscillators' phases reside in $[\pi,\,2\pi]$ at $t_0+T$ from any initial phase distribution. In Part \uppercase\expandafter{\romannumeral2}, we prove that these legitimate oscillators will reset their phases to $0$ at the same time and will keep having identical phases with a collective oscillation period $T=2\pi$ seconds, i.e., they will achieve synchronization.
	
\emph{Part \uppercase\expandafter{\romannumeral1} (all $N-M$ legitimate oscillators' phases reside in $[\pi,\,2\pi]$ at $t_0+T$)}: Since the number of attackers satisfies $M<d-\lfloor 2N/3\rfloor$ for $d\leq N-1$, using Lemma \ref{Lemma_4_1_Global}, we have
\[
M<d-\lfloor 2N/3\rfloor\leq N-1-\lfloor 2N/3\rfloor\leq\lfloor N/3\rfloor.
\]
According to the attacker model in Sec. \uppercase\expandafter{\romannumeral3}, we know that $M<\lfloor N/3\rfloor$ attackers can emit at most $M<\lfloor N/3\rfloor$ pulses within any time interval of length $\epsilon$. Since no legitimate oscillator fires within time interval $[t_0,\,t_0+T]$ under Mechanism $1$, a legitimate oscillator can receive at most $M<\lfloor N/3\rfloor$ pulses in any time interval of length $\epsilon$ within $[t_0,\,t_0+T]$. Therefore, upon reaching $2\pi$ rad within $[t_0,\,t_0+T]$, a legitimate oscillator will reset its phase to $\pi$ rad instead of $0$. 

Since the initial phases of all $N-M$ legitimate oscillators distribute arbitrarily in $[0,\,2\pi]$, at time $t_0$, they can be categorized into three possible scenarios, as depicted in Figure \ref{Theorem_1_Global_fig}:
\leftmargini=19mm
\begin{enumerate}
	\item[\emph{Scenario a)}:] all legitimate oscillators' initial phases reside in $[\pi,\,2\pi]$;
	\item[\emph{Scenario b)}:] all legitimate oscillators' initial phases reside in $[0,\,\pi)$;
	\item[\emph{Scenario c)}:] legitimate oscillators' initial phases reside partially in $[0,\,\pi)$ and partially in $[\pi,\,2\pi]$.
\end{enumerate}
\begin{figure}[htbp]
	\centering
	\includegraphics[width=0.35\textwidth]{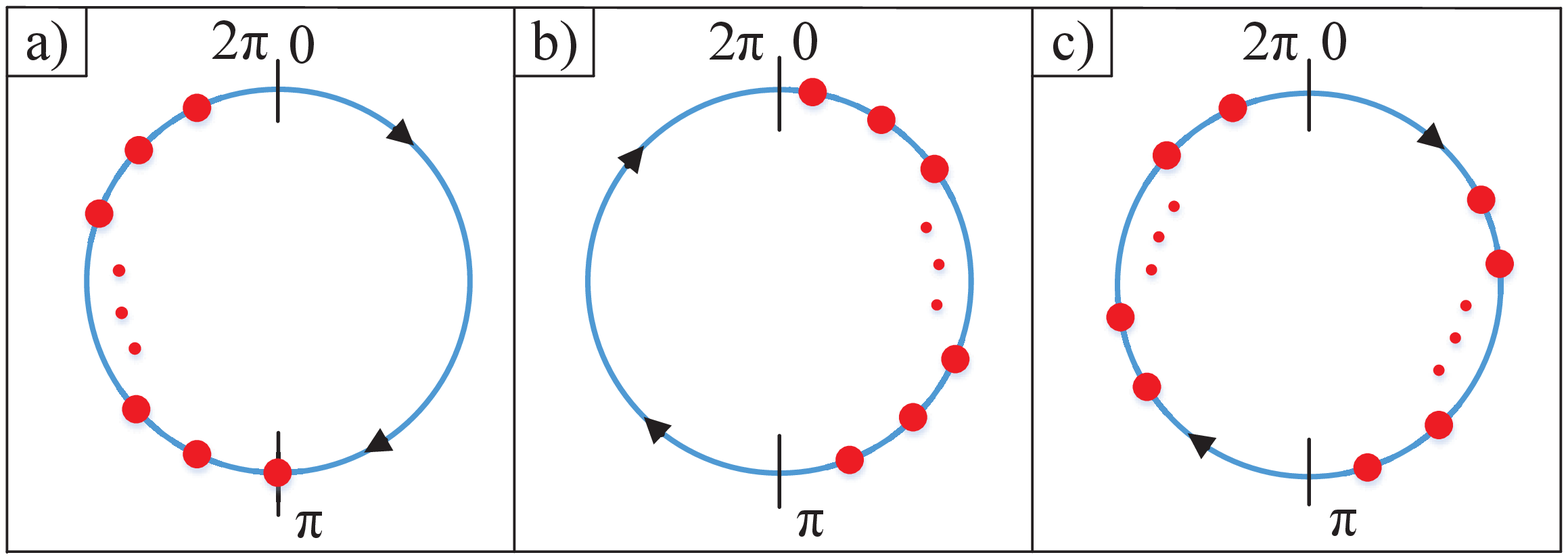}
	\caption{Three possible initial phase distributions of legitimate oscillators.}	
	\label{Theorem_1_Global_fig}
\end{figure}

Next, we show that no matter which of the three scenarios the initial phase distribution belongs to, all legitimate oscillators' phases will reside in $[\pi,\, 2\pi]$ at time $t_0+T$. We discuss all three scenarios of initial phase distribution one by one:

\emph{Scenario a)}: All legitimate oscillators' initial phases reside in $[\pi,\,2\pi]$. After reaching $2\pi$ rad within $[t_0,\,t_0+T]$, because a legitimate oscillator will receive less than $\lfloor N/3\rfloor$ pulses in any time interval of length $\epsilon$, it will reset its phase to $\pi$ rad according to Mechanism $1$. Therefore, we have that all legitimate oscillators will reside in $[\pi,\, 2\pi]$ at time $t_0+T$.
		
\emph{Scenario b)}: All legitimate oscillators' initial phases reside in $[0,\,\pi)$. According to Mechanism $1$, a legitimate oscillator will not respond to incoming pulses when its phase resides in $[0,\,\pi)$. So all legitimate oscillators' phases will evolve freely towards $\pi$ rad without perturbation and will enter $[\pi,\, 2\pi]$ no later than time instant $t_0+T/2$. After reaching $2\pi$ rad within $[t_0,\,t_0+T]$, because a legitimate oscillator will receive less than $\lfloor N/3\rfloor$ pulses in any time interval of length $\epsilon$, it  will reset its phase to $\pi$ rad according to Mechanism $1$. Therefore, we have that all legitimate oscillators' phases will reside in $[\pi,\, 2\pi]$ at time $t_0+T$.

\emph{Scenario c)}: Legitimate oscillators' initial phases reside partially in $[0,\,\pi)$ and partially in $[\pi,\,2\pi]$. Since legitimate oscillators with phases residing in $[0,\,\pi)$ will evolve freely into $[\pi,\,2\pi]$ under Mechanism $1$, we have that no later than time instant $t_0+T/2$, these oscillators' phase will be in $[\pi, 2\pi]$. Further making use of the fact that a legitimate oscillator will reset its phase to $\pi$ rad upon reaching $2\pi$ rad since less than $\lfloor N/3\rfloor$ pulses will be received by a single oscillator in any time interval of length $\epsilon$, we obtain that all legitimate oscillators' phases will reside in $[\pi,\, 2\pi]$ at time $t_0+T$.

Summarizing the above three scenarios, we have that regardless of the initial phase distribution, all legitimate oscillators' phases will reside in $[\pi,\, 2\pi]$ at time $t_0+T$ despite the presence of attacker pulses.

\emph{Part \uppercase\expandafter{\romannumeral2} (all legitimate oscillators will reset their phases to $0$ at the same time and will keep having identical phases with a collective oscillation period $T=2\pi$ seconds)}: From Part \uppercase\expandafter{\romannumeral1}, we know that no legitimate oscillator fires or resets its phase to $0$ within time interval $[t_0,\,t_0+T]$ and all legitimate oscillators' phases reside in $[\pi,\,2\pi]$ at time $t_0+T$. Therefore, all legitimate oscillators' phases will reach $2\pi$ rad and fire at least once within $(t_0+T,\,t_0+3T/2]$. Without loss of generality, we label all $N-M$ legitimate oscillators according to the order of their first firing time$\footnote{For example, if the firing sequence of legitimate oscillators A, B, C is A, A, B, A, C, then oscillators A,\,B,\,C are labeled as oscillators 1,\,2,\,3, respectively.}$ and denote $t_1\in(t_0+T,\,t_0+3T/2]$ as the first firing time of legitimate oscillator $\lfloor N/3\rfloor+1$. Only the following two scenarios could happen right before legitimate oscillator $\lfloor N/3\rfloor+1$ fires at $t_1$: 

\leftmargini=21.5mm
\begin{enumerate}
	\item[\emph{Scenario 1.1}:] no legitimate oscillator has reset its phase to $0$ before legitimate oscillator $\lfloor N/3\rfloor+1$ fires at $t_1$. 
	
	\item[\emph{Scenario 1.2}:] at least one legitimate oscillator has reset its phase to $0$ before legitimate oscillator $\lfloor N/3\rfloor+1$ fires at $t_1$. 
\end{enumerate}
Next, we show that in both scenarios all legitimate oscillators will reset their phases to $0$ at the same time and will keep having identical phases with a collective oscillation period $T=2\pi$ seconds, i.e., they will achieve synchronization. 

We first consider \emph{Scenario 1.1}, i.e., no legitimate oscillator has reset its phase to $0$ before legitimate oscillator $\lfloor N/3\rfloor+1$ fires at $t_1$. Since all the $N-M$ legitimate oscillators are labeled according to the order of their first firing time instants and no legitimate oscillator fired within $[t_0,\,t_0+T]$ according to Mechanism $1$, we have that before the firing of legitimate oscillator $\lfloor N/3\rfloor+1$ at $t_1$, $\lfloor N/3 \rfloor$ legitimate oscillators fired within time interval $(t_0+T,\,t_1]$ and every legitimate oscillator $i$ for $i=1,\,2,\,\cdots,N-M$ received at least $\lfloor N/3\rfloor-(N-d_i)$ pulses within time interval $(t_0+T,\,t_1]$, where $(N-d_i)$ is the number of oscillators which are not connected with oscillator $i$. According to Lemma \ref{Lemma_4_1_Global}, we have:
\begin{flalign}\label{Firing_number}
\lfloor N/3\rfloor-(N-d_i)=&\lfloor N/3\rfloor+\lfloor 2N/3\rfloor-N+d_i-\lfloor 2N/3\rfloor\nonumber\\
\geq&d_i-\lfloor 2N/3\rfloor-1
\end{flalign}
meaning that before the firing of legitimate oscillator $\lfloor N/3\rfloor+1$, every legitimate oscillator $i$ for $i=1,\,2,\,\cdots,N-M$ has already received at least $d_i-\lfloor 2N/3\rfloor-1$ pulses within time interval $(t_0+T,\,t_1]$ (note that this interval has length less than $T/2$).

When legitimate oscillator $i$ fires at $t_1$, at least $d$ legitimate oscillators will receive the pulse. As every legitimate oscillator has received at least $d_i-\lfloor 2N/3\rfloor-1$ pulses within $(t_0+T,\,t_1]$ (as proven in the previous paragraph), we have that for all legitimate oscillators, the condition $3a)$ of Mechanism $1$ is satisfied (note that in Scenario 1.1 we consider the case that no legitimate oscillators reset their phases to $0$ within $(t_1-T,\,t_1)$) and hence all legitimate oscillators that receive the pulse from legitimate oscillator $\lfloor N/3\rfloor+1$ (with quantity at least $d-M$) will shift their phases to $2\pi$ rad.


Next, we proceed to proved that among the $d-M$ legitimate oscillators whose phases are shifted to $2\pi$ rad by the pulse from legitimate oscillator $\lfloor N/3\rfloor+1$ at $t_1$, at least $d-M-\lfloor N/3\rfloor$ of them will fire. According to condition $2)$ of Mechanism $1$, if an oscillator fired within $(t_1-\epsilon,\,t_1]$, it cannot fire again at $t_1$. Since only $\lfloor N/3\rfloor$ legitimate oscillators fired before the firing of legitimate oscillator $\lfloor N/3\rfloor+1$ at $t_1$ (note that these oscillators might fire within $(t_1-\epsilon,\,t_1]$), we obtain that among the $d-M$ legitimate oscillators whose phases are shifted to $2\pi$ rad at $t_1$ by the pulse from legitimate oscillator $\lfloor N/3\rfloor+1$, at least $d-M-\lfloor N/3\rfloor$ of them will fire at $t_1$. From Lemma \ref{Lemma_4_1_Global} and making use of the fact $M<d-\lfloor 2N/3\rfloor$, we have
\[
 d-M-\lfloor N/3\rfloor>\lfloor N/3\rfloor
\]
meaning that the firing of legitimate oscillator $\lfloor N/3\rfloor+1$ will trigger at least $\lfloor N/3\rfloor+1$ other legitimate oscillators to fire simultaneously at $t_1$. The firing of these oscillators will further makes every legitimate oscillator $i$ for $i=1,2,\cdots,N-M$ to receive at least $d_i-\lfloor 2N/3\rfloor$ pulses at $t_1$ based on the relationship in (\ref{Firing_number}). Since all legitimate oscillators' phases reside in $[\pi,\,2\pi]$, according to Mechanism $1$, they will be shifted to $2\pi$ rad at $t_1$. Then, all the non-firing legitimate oscillators except those fired within the past $\epsilon$ time will fire at $t_1$.  

Recalling that only $\lfloor N/3\rfloor$ legitimate oscillators fired before legitimate oscillator $\lfloor N/3\rfloor+1$ fires at $t_1$, we obtain that at least $N-M-\lfloor N/3\rfloor$ legitimate oscillators will fire at $t_1$ and every legitimate oscillator $i$ for $i=1,2,\cdots,N-M$ will receive at least $N-M-\lfloor N/3\rfloor-(N-d_i)$ pulses from this firing event. According to Lemma \ref{Lemma_4_1_Global} and combining the fact $M<d-\lfloor 2N/3\rfloor$, we have 
\[
N-M-\lfloor N/3\rfloor-(N-d_i)=d_i-M-\lfloor N/3\rfloor>\lfloor N/3\rfloor
\]
meaning that every legitimate oscillator receives over $\lfloor N/3\rfloor$ pulses at $t_1$. Since every legitimate oscillator has phase residing on $2\pi$ and receives over $\lfloor N/3\rfloor$ pulses within $(t_1-\epsilon,\,t_1]$, all legitimate oscillators' phases will reset to $0$ after the firing event at $t_1$. 

Next, we proceed to prove that after time instant $t_1$, all legitimate oscillators will keep having identical phases and their collective oscillation period is $T=2\pi$ seconds, i.e., they will achieve synchronization.

From the above analysis, all legitimate oscillators' phases will be reset to $0$ at $t_1$. Because a legitimate oscillator's phase can only be affected by an incoming pulse when it resides in $[\pi,\,2\pi]$, we have that all legitimate oscillators' phases will evolve freely towards $\pi$ rad within time interval $(t_1,\,t_1+T/2)$. As soon as all legitimate oscillators' phases reach $\pi$ rad at time instant $t_1+T/2$, according to Mechanism $1$, legitimate oscillator $i$'s phase can be affected by an incoming pulse at time instant $t_1'\in[t_1+T/2,\,t_1+T)$ only if it receives over $d_i-\lfloor 2N/3\rfloor-1$ pulses within $(t_1'-\epsilon,\,t_1']$. Since the number of attackers satisfies $M\leq d-\lfloor 2N/3\rfloor-1\leq d_i-\lfloor 2N/3\rfloor-1$ and each attacker can emit at most one attack pulse within a time interval less than $\epsilon$, so attack pulses alone are not enough to trigger a phase shift on any legitimate oscillator's phase. Therefore, all legitimate oscillators will have identical phases and evolve freely towards $2\pi$ rad. 

At time instant $t_1+T$, all legitimate oscillators reach phase $2\pi$ rad and fire simultaneously, which makes legitimate oscillator $i$ for $i=1,\,2,\,\cdots, N-M$ receive at least $N-M-(N-d_i)=d_i-M>\lfloor N/3\rfloor$ pulses. Therefore, all legitimate oscillators will reset their phases to $0$ immediately. By repeating the above analyses, we can get that after time instant $t_1$, all legitimate oscillators will have identical phases with a collective oscillation period $T=2\pi$ seconds, i.e., phase synchronization of all legitimate oscillators is achieved immediately after time instant $t_1$.

Next, we consider \emph{Scenario 1.2}, i.e., at least one legitimate oscillator has reset its phase to $0$ before legitimate oscillator $\lfloor N/3\rfloor+1$ fires at $t_1$. Without loss of generality, we assume that legitimate oscillator $k$ is the first legitimate oscillator who resets its phase to $0$ within time interval $(t_0+T,\,t_1]$ and it resets its phase to $0$ at $t_k\in(t_0+T,\,t_1]$. According to Mechanism $1$, legitimate oscillator $k$ must have received over $\lfloor N/3\rfloor$ pulses within $(t_k-\epsilon,\,t_k]$.


We assume that legitimate oscillator $k$ receives the $\lfloor N/3\rfloor+1$'th pulse at time $t_k'$ within time interval $(t_k-\epsilon,\,t_k]$ and the pulse is sent by oscillator $k'$. According to condition $2)$ of Mechanism $1$, an oscillator can only fire once within $(t_k-\epsilon,\,t_k]$. So before the firing of oscillator $k'$ at $t_k'$, at least $\lfloor N/3\rfloor$ oscillators fired within $(t_k-\epsilon,\,t_k']$. Based on the relationship in (\ref{Firing_number}), every legitimate oscillator $i$ for $i=1,\,2,\,\cdots,N-M$ should have received at least $d_i-\lfloor 2N/3\rfloor-1$ pulses within $(t_k-\epsilon,\,t_k']$.

Then following the same line of reasoning in \emph{Scenario 1.1}, we have that the pulse of oscillator $k'$ will shift the phases of at least $d-M$ legitimate oscillators (which receive the pulse) to $2\pi$ rad and at least $\lfloor N/3\rfloor+1$ of them will fire at $t_k'$. Then, all legitimate oscillators' phases will be shifted to $2\pi$ rad and at least $N-M-\lfloor N/3\rfloor$ legitimate oscillators will fire at $t_k'$. Every legitimate oscillator will receive over $\lfloor N/3\rfloor$ pulses in this firing event at $t_k'$ and will reset its phase to $0$. We can also infer $t_k'=t_k=t_1$.

Next, following the same line of reasoning in \emph{Scenario 1.1}, we obtain that after time instant $t_1$, all legitimate oscillators will have identical phases and their collective oscillation period is $T=2\pi$ seconds, i.e., phase synchronization of all legitimate oscillators is achieved immediately after time instant $t_1$.
\end{proof}

\begin{Remark}
Theorem \ref{Theorem_4_1_Global} requires that the degree of the network is over $\lfloor 2N/3\rfloor$, which, according to \cite{wang_tsp:12}, also guarantees that the network is strongly connected.
\end{Remark}

\begin{Remark}
The mechanism requires that all legitimate oscillators to start at the same time instant. However, starting at the same time instant does not avoid dealing with arbitrary phase distribution since even after synchronization, for a non-all-to-all topology on which different attackers can affect different legitimate oscillators, attackers considered in this paper can disturb the phases of legitimate oscillators to an arbitrary distribution under existing pulse-coupled synchronization strategies.
\end{Remark}

\begin{Remark}
It is worth noting that the theoretical analysis in this paper is significantly different from our prior results in \cite{wang2018attack,wang2018pulse,wang2020attack}. In \cite{wang2018attack,wang2018pulse,wang2020attack}, we can prove that the length of the containing arc will decrease monotonically with time. However, in this paper, since the initial phases of all legitimate oscillators are arbitrarily distributed in the entire oscillation period and the considered attacker model is much stronger, such monotonic decreasing does not exist (see numerical simulation results in Figure \ref{Attack_N_known}, Figure \ref{Attack_N_unknown}, Figure \ref{Attack_N_known_Compare}, and Figure \ref{Attack_N_unknown_Compare}. Instead, we opt to prove that after initiation, our judiciously designed interaction mechanism can drive the phases of legitimate oscillators to within a half cycle in finite time. Then we proceed to prove that one legitimate oscillator's firing can (either directly or indirectly) trigger all legitimate oscillators to reset their phases to $0$ and the interaction mechanism can maintain phase synchronization even in the presence of attack pulses.
\end{Remark}

Mechanism $1$ can also guarantee synchronization of densely connected PCO networks in the absence of attacks, as detailed below:

\begin{Corollary}\label{Corollary_4_1_Global}
For a network of $N$ legitimate PCOs, if the degree of the PCO network satisfies $d>\lfloor 2N/3\rfloor$, then all oscillators will synchronize under Mechanism $1$ from an arbitrary initial phase distribution.
\end{Corollary}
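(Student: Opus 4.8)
The plan is to obtain the Corollary as the attack-free specialization of Theorem~\ref{Theorem_4_1_Global}. An attack-free network of $N$ legitimate oscillators is precisely the setting of Theorem~\ref{Theorem_4_1_Global} with the number of Byzantine attackers set to $M=0$, so I would first verify that the hypotheses of that theorem are met in this case. The degree condition $d>\lfloor 2N/3\rfloor$ is assumed outright, and it immediately yields $d-\lfloor 2N/3\rfloor\geq 1>0=M$, so the attacker-count requirement $M<d-\lfloor 2N/3\rfloor$ holds automatically. Hence Theorem~\ref{Theorem_4_1_Global} applies verbatim and delivers synchronization of all $N$ oscillators from an arbitrary initial phase distribution.

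If instead one wanted a self-contained argument, I would replay the two-part proof of Theorem~\ref{Theorem_4_1_Global} and observe that every step only becomes easier when $M=0$. In Part~I, with no attacker pulses present and no legitimate oscillator firing during $[t_0,\,t_0+T]$, each oscillator receives zero pulses in any window of length $\epsilon$ before reaching $2\pi$ rad; since $0<\lfloor N/3\rfloor$, each oscillator therefore resets to $\pi$ rad upon its first return, and the three-scenario case analysis forces all phases into $[\pi,\,2\pi]$ by time $t_0+T$ exactly as before. In Part~II, the counting bounds $d-M$, $d-M-\lfloor N/3\rfloor$, and $N-M-\lfloor N/3\rfloor$ used in the firing cascade all specialize to the larger quantities $d$, $d-\lfloor N/3\rfloor$, and $N-\lfloor N/3\rfloor$, so the strict inequalities supplied by Lemma~\ref{Lemma_4_1_Global} (in particular $d-\lfloor N/3\rfloor>\lfloor N/3\rfloor$) continue to hold, guaranteeing that the first firing of oscillator $\lfloor N/3\rfloor+1$ still triggers enough simultaneous firings to reset every oscillator's phase to $0$.

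Since there is essentially no new content beyond instantiating $M=0$, I do not anticipate any genuine obstacle; the only point requiring a line of care is confirming that the chain $M<d-\lfloor 2N/3\rfloor\leq N-1-\lfloor 2N/3\rfloor\leq\lfloor N/3\rfloor$ from the proof of Theorem~\ref{Theorem_4_1_Global}, which relies on the second inequality of Lemma~\ref{Lemma_4_1_Global} with $(x,y,Q)=(3,2,N)$, remains valid when $M=0$ --- which it does trivially. Thus the cleanest write-up is the one-line reduction to Theorem~\ref{Theorem_4_1_Global}, and the replay of the proof is offered only to make transparent that no attack-specific step was load-bearing.
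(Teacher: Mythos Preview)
Your proposal is correct and matches the paper's own treatment: the paper simply states that Corollary~\ref{Corollary_4_1_Global} is the special case $M=0$ of Theorem~\ref{Theorem_4_1_Global} and omits further details. Your verification that $d>\lfloor 2N/3\rfloor$ forces $M=0<d-\lfloor 2N/3\rfloor$ is exactly the one check needed, and the optional replay you sketch is accurate but more than the paper itself provides.
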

\begin{proof} 
Corollary \ref{Corollary_4_1_Global} is a special case of Theorem \ref{Theorem_4_1_Global} when the number of attackers $M$ is set to $0$ and hence is omitted.
\end{proof}

\section{Extension to the Case where $N$ is Unknown to Individual Oscillators}

The implementation of Mechanism $1$ requires each node to have access to $N$, which may not be feasible in a completely decentralized network. Therefore, in this section, we propose a mechanism for the case where $N$ is unknown to individual oscillators. The essence is to leverage the degree information of individual oscillators, as detailed below:

\noindent\rule{9cm}{0.12em}
\emph{New Pulse-Based Synchronization Mechanism (Mechanism 2):}\\
\rule{9cm}{0.1em}
\begin{enumerate}
\item The phase $\phi_i$ of legitimate oscillator $i$ evolves from $0$ to $2\pi$ rad with a constant speed $\omega=1$ rad/second.
		
\item Once $\phi_i$ reaches $2\pi$ rad at time $t$, oscillator $i$ fires (emits a pulse) if it did not fire within $(t-\epsilon,\,t]$ and an entire period $T=2\pi$ seconds has elapsed since initiation. Then oscillator $i$ resets its phase from $2\pi$ rad to $0$ if it received at least $\lfloor d_i/3 \rfloor$ pulses within $(t-\epsilon,\,t]$. Otherwise, it resets its phase from $2\pi$ rad to $\pi$ rad.
		
\item When oscillator $i$ receives a pulse at time instant $t'$, it shifts its phase to $2\pi$ rad only if $\phi_i\in[\pi,\,2\pi]$ at time instant $t'$ and one of the following conditions is satisfied:
\begin{enumerate}		
\item before receiving the current pulse, oscillator $i$ has already received at least $\lfloor d_i/6\rfloor-1$ pulses in $[t'-T/2,\,t']$ and it did not reset its phase from $2\pi$ rad to $0$ within $(t'-T,\,t')$.	
			
\item before receiving the current pulse, oscillator $i$ has already received at least $\lfloor d_i/6\rfloor-1$ pulses in $(t'-\epsilon,\,t']$.
\end{enumerate}
Otherwise, the pulse has no effect on $\phi_i$ who will evolve freely towards $2\pi$ rad.
\end{enumerate}
\rule{9cm}{0.12em}

Following a similar line of reasoning in Section \uppercase\expandafter{\romannumeral5}, we can prove that Mechanism $2$ can synchronize densely connected PCO networks both in the presence and absence of Byzantine attackers.

\begin{Theorem}\label{Theorem_4_2_Global}
For a network of $N$ PCOs among which $M$ are compromised and launch attacks following the attack model in Sec \uppercase\expandafter{\romannumeral3}, if the degree of the PCO network satisfies $d>\lfloor3N/4\rfloor$ and the number of attackers $M$ satisfies $M<\lfloor d/6 \rfloor$, then all legitimate oscillators will synchronize under Mechanism $2$ from any initial phase distribution even if $N$ is unknown to individual oscillators.
\end{Theorem}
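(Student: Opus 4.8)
The plan is to mirror the two-part structure of the proof of Theorem~\ref{Theorem_4_1_Global}, replacing every global threshold by its local, degree-based counterpart: Mechanism~2 uses $\lfloor d_i/3\rfloor$ in place of $\lfloor N/3\rfloor$ for resetting to $0$, and $\lfloor d_i/6\rfloor-1$ in place of $d_i-\lfloor 2N/3\rfloor-1$ in conditions $3a)$ and $3b)$. Since each oscillator now acts only on its own degree $d_i$ (so that $N$ need not be known), the analysis must re-derive all the counting bounds in terms of $d_i$, $d$, and $N$. Accordingly, the first step is to record a degree-based analogue of the firing-count inequality~(\ref{Firing_number}): if $P$ legitimate oscillators fire, then every legitimate oscillator $i$ receives at least $P-(N-d_i)$ of the corresponding pulses, and I would isolate the three inequalities that drive the cascade, namely $(a')$ $P-(N-d_i)\ge\lfloor d_i/6\rfloor-1$ (enough prior firings to meet condition $3a)$), $(b')$ $d-M-P\ge N-d+1$ (the surviving firings reach every oscillator), and $(c')$ $d_i-M-P\ge\lfloor d_i/3\rfloor$ (the final firing volley forces a reset to $0$).

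For Part~\uppercase\expandafter{\romannumeral1} I would argue, exactly as in scenarios a)--c) of Theorem~\ref{Theorem_4_1_Global}, that all $N-M$ legitimate phases lie in $[\pi,2\pi]$ at $t_0+T$. The only mechanism-specific input is that attack pulses alone cannot trigger a reset to $0$: no legitimate oscillator fires within $[t_0,t_0+T]$, so in any window of length $\epsilon$ an oscillator receives at most $M$ pulses, and since $M<\lfloor d/6\rfloor\le\lfloor d_i/6\rfloor\le\lfloor d_i/3\rfloor$ the reset-to-$0$ threshold of condition $2)$ is never reached. Hence every oscillator reaching $2\pi$ resets to $\pi$, while phases in $[0,\pi)$ evolve freely, so all phases are driven into $[\pi,2\pi]$ by $t_0+T$.

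For Part~\uppercase\expandafter{\romannumeral2} I would label the legitimate oscillators by the order of their first firings, fix a critical index $P$, let $t_1$ be the first firing time of oscillator $P+1$, and split into the same two scenarios according to whether some oscillator has already reset to $0$ before $t_1$. The cascade then proceeds as in Theorem~\ref{Theorem_4_1_Global}: the $P$ firings before $t_1$ give each oscillator at least $P-(N-d_i)$ received pulses, which by $(a')$ satisfies condition $3a)$, so the firing at $t_1$ shifts its at least $d-M$ legitimate neighbours to $2\pi$; excluding the at most $P$ oscillators inside their refractory $\epsilon$-window leaves at least $d-M-P$ that fire, and by $(b')$ these reach every remaining oscillator and shift it to $2\pi$ as well; finally all but at most $P$ oscillators fire simultaneously, so each receives at least $d_i-M-P$ pulses, which by $(c')$ exceeds $\lfloor d_i/3\rfloor$ and forces a synchronous reset to $0$ at $t_1$. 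After $t_1$ the maintenance argument is unchanged: having reset to $0$ disables condition $3a)$ in the following cycle, and $M<\lfloor d_i/6\rfloor$ prevents attack pulses alone from meeting condition $3b)$, so the legitimate phases evolve together, all refire at $t_1+T$, and the collective period stays $T=2\pi$.

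The main obstacle, and the reason the hypotheses tighten from $d>\lfloor 2N/3\rfloor$, $M<d-\lfloor 2N/3\rfloor$ to $d>\lfloor 3N/4\rfloor$, $M<\lfloor d/6\rfloor$, is to show that a single integer $P$ can satisfy $(a')$--$(c')$ for every degree $d_i\ge d$. Since $(N-d_i)+\lfloor d_i/6\rfloor$ is largest and $d_i-\lfloor d_i/3\rfloor$ smallest at $d_i=d$, the requirement collapses to the nonemptiness of the window
\[
(N-d)+\lfloor d/6\rfloor-1\ \le\ P\ \le\ \min\bigl\{\,2d-M-N-1,\ \lceil 2d/3\rceil-M\,\bigr\}.
\]
I would verify this using the floor estimates $2\lfloor d/6\rfloor\le d/3$ and $M\le\lfloor d/6\rfloor-1$, which reduce the requirement to $d$ exceeding $\tfrac{3}{4}N$ up to an additive constant; the strict inequality $d>\lfloor 3N/4\rfloor$, giving $d\ge\lfloor 3N/4\rfloor+1$, supplies exactly the $O(1)$ margin needed to close it. Establishing these floor-function bounds, a degree-based strengthening of Lemma~\ref{Lemma_4_1_Global}, is the crux; once the window is nonempty the remainder is a line-by-line transcription of Theorem~\ref{Theorem_4_1_Global} with $\lfloor d_i/3\rfloor$ and $\lfloor d_i/6\rfloor-1$ in place of the global thresholds.
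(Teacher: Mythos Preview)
Your proposal is correct and follows essentially the same two-part structure and cascade argument as the paper; the paper simply fixes the critical index explicitly as $P=\lfloor d/2\rfloor$ and derives the key bound $\lfloor d/2\rfloor-(N-d_i)\ge\lfloor d_i/6\rfloor-1$ directly (then defers the remaining steps to Theorem~\ref{Theorem_4_1_Global}), whereas you abstract the three counting requirements into a feasibility window for $P$ and verify it is nonempty. Your use of condition~$3a)$ in step $(b')$ (one pulse suffices once the threshold is met) is a mild simplification of the paper's route through condition~$3b)$, but the underlying mechanism and arithmetic are the same.
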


\begin{proof} We set the initial time instant to $t_0$. Similar to the proof in Theorem \ref{Theorem_4_1_Global}, the following proof is divided into two parts. In part \uppercase\expandafter{\romannumeral1}, we prove that all $N-M$ legitimate oscillators will have phases residing in $[\pi,\,2\pi]$ at $t_0+T$. In Part \uppercase\expandafter{\romannumeral2}, we prove that these legitimate oscillators will reset their phases to $0$ at the same time and will keep having identical phases with a collective oscillation period $T=2\pi$ seconds, i.e., they will achieve synchronization.

\emph{Part \uppercase\expandafter{\romannumeral1} (all $N-M$ legitimate oscillators' phases reside in $[\pi,\,2\pi]$ at $t_0+T$)}: Since the number of attackers satisfies $M<\lfloor d/6 \rfloor$, we have 
\[
M<\lfloor d/6 \rfloor\leq\lfloor d/3\rfloor\leq\lfloor d_i/3\rfloor
\]
for $i=1,\,2,\,\cdots,N-M$. Following the same line of reasoning in the proof of Theorem \ref{Theorem_4_1_Global}, Part \uppercase\expandafter{\romannumeral1}, we have that a legitimate oscillator will only reset its phases to $\pi$ rad within time interval $[t_0,\,t_0+T]$ and all legitimate oscillators' phases will reside in $[\pi,\,2\pi]$ at time instant $t_0+T$ no matter what the initial phase distribution is.

\emph{Part \uppercase\expandafter{\romannumeral2} (all legitimate oscillators will reset their phases to $0$ at the same time and will keep having identical phases with a collective oscillation period $T=2\pi$ seconds)}: Since no legitimate oscillator fires or resets its phase to $0$ within time interval $[t_0,\,t_0+T]$ and all legitimate oscillators' phases reside in $[\pi,\,2\pi]$ at time $t_0+T$, all legitimate oscillators' phases will reach $2\pi$ rad and fire at least once within $(t_0+T,\,t_0+3T/2]$. Without loss of generality, we label all $N-M$ legitimate oscillators according to the order of their first firing time and denote $t_1'\in(t_0+T,\,t_0+3T/2]$ as the first firing time instant of legitimate oscillator $\lfloor d/2\rfloor+1$. Only the following two scenarios could happen before legitimate oscillator $\lfloor d/2\rfloor+1$ fires at $t_1'$: 
\leftmargini=21.5mm
\begin{enumerate}
	\item[\emph{Scenario 2.1}:] no legitimate oscillator has reset its phase to $0$ before legitimate oscillator $\lfloor d/2\rfloor+1$ fires at $t_1'$. 
	
	\item[\emph{Scenario 2.2}:] at least one legitimate oscillator has reset its phase to $0$ before legitimate oscillator $\lfloor d/2\rfloor+1$ fires at $t_1'$. 
\end{enumerate}	
Next, we show that in both scenarios all legitimate oscillators will reset their phases to $0$ at the same time and will keep having identical phases with a collective oscillation period $T=2\pi$ seconds, i.e., they will achieve synchronization.

We first consider \emph{Scenario 2.1}, i.e., no legitimate oscillator has reset its phase to $0$ before legitimate oscillator $\lfloor d/2\rfloor+1$ fires at $t_1'$. Since all the $N-M$ legitimate oscillators are labeled according to the order of their first firing time instants and no legitimate oscillator fired within $[t_0,\,t_0+T]$ according to Mechanism $1$, we have that before the firing of legitimate oscillator $\lfloor d/2\rfloor+1$ at $t_1'$, $\lfloor d/2 \rfloor$ legitimate oscillators should have fired within time interval $(t_0+T,\,t_1']$ and every legitimate oscillator $i$ for $i=1,\,2,\,\cdots,N-M$ should have received at least $\lfloor d/2\rfloor-(N-d_i)$ pulses within $(t_0+T,\,t_1']$, where $(N-d_i)$ is the number of oscillators which are not connected to oscillator $i$. Since we have $d>\lfloor 3N/4\rfloor$, one can obtain $d_i\geq d\geq\lfloor 3N/4\rfloor+1>3N/4$ for $i=1,\,2,\,\cdots,N-M$. Using Lemma \ref{Lemma_4_1_Global} and combining the fact $d_i>3N/4$, we have:
\begin{flalign}\label{Firing_number_unknown}
\lfloor d/2\rfloor-(N-d_i)\geq&\lfloor d/2\rfloor-N+\lfloor 5d_i/6\rfloor+\lfloor d_i/6\rfloor\nonumber\\
\geq&\lfloor 3N/8\rfloor+\lfloor 5N/8\rfloor-N+\lfloor d_i/6\rfloor\nonumber\\
\geq&\lfloor d_i/6\rfloor-1
\end{flalign}
meaning that before the firing of legitimate oscillator $\lfloor d/2\rfloor+1$, every legitimate oscillator $i$ for $i=1,\,2,\,\cdots,N-M$ has already received at least $\lfloor d_i/6\rfloor-1$ pulses within time interval $(t_0+T,\,t_1']$ (note that this interval has length less than $T/2$). Then following the same line of reasoning in \emph{Scenario 1.1} of Theorem \ref{Theorem_4_1_Global}, we can prove that every legitimate oscillator $i$ will receive at least $\lfloor d_i/3 \rfloor$ pulses at $t_1'$ and reset its phases to $0$. Then starting from time instant $t_1'$, all legitimate oscillators will have identical phases with a collective oscillation period $T=2\pi$ seconds, i.e., they will achieve synchronization. 

The proof of \emph{Scenario 2.2} follows the same line of reasoning in \emph{Scenario 1.2} of Theorem \ref{Theorem_4_1_Global} and is omitted. 

Summarizing the above analyses, we conclude that Mechanism $2$ can synchronize densely connected PCO networks in the presence of Byzantine attacks even when $N$ is unknown to individual oscillators and initial phases are distributed arbitrarily.
\end{proof}

It is worth noting that Mechanism $2$ can also guarantee synchronization of densely connected PCO networks in the absence of attacks when $N$ is unknown to individual oscillators, as shown below:
 
\begin{Corollary}\label{Corollary_4_2_Global}
For a network of $N$ legitimate PCOs, if the degree of the PCO network satisfies $d>\lfloor 3N/4\rfloor$, then all oscillators will synchronize under Mechanism $2$ from any initial phase distribution even if $N$ is unknown to individual oscillators.
\end{Corollary}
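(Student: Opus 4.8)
The plan is to derive Corollary \ref{Corollary_4_2_Global} as the specialization of Theorem \ref{Theorem_4_2_Global} to the attack-free case $M=0$, exactly as Corollary \ref{Corollary_4_1_Global} was obtained from Theorem \ref{Theorem_4_1_Global}. When no node is compromised, every pulse a legitimate oscillator receives comes from another legitimate oscillator, so the attacker count $M$ enters the proof of Theorem \ref{Theorem_4_2_Global} only as an upper bound on the spurious pulses the mechanism must tolerate. Setting $M=0$ therefore removes the worst-case disturbance and leaves every counting estimate in that proof intact, in fact with strictly larger margins.

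To make the reduction explicit I would revisit the two parts of the proof of Theorem \ref{Theorem_4_2_Global} with $M=0$. In Part I the essential observation was that, throughout $[t_0,\,t_0+T]$, a legitimate oscillator receives fewer than $\lfloor d_i/3\rfloor$ pulses in any window of length $\epsilon$; with $M=0$ and no legitimate oscillator firing before a full period elapses, it receives none, and since $d_i\geq d\geq 4$ gives $\lfloor d_i/3\rfloor\geq 1$, each oscillator still resets to $\pi$ rad upon first reaching $2\pi$ rad, placing all phases in $[\pi,\,2\pi]$ at $t_0+T$. In Part II the inequality chain ending in (\ref{Firing_number_unknown}) carries $M$ on the disturbance side, so $M=0$ only widens the margins; the firing of oscillator $\lfloor d/2\rfloor+1$ therefore still cascades through condition $3a)$ of Mechanism $2$ until every legitimate oscillator has received at least $\lfloor d_i/3\rfloor$ pulses and resets to $0$ at the common instant $t_1'$. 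The subsequent maintenance step is equally untouched: within $[\pi,\,2\pi]$ a phase shift requires at least $\lfloor d_i/6\rfloor-1$ earlier pulses, a budget that no source can supply once $M=0$, so the synchronized state with collective period $T=2\pi$ is preserved across every later period.

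The single point that genuinely needs checking is that the hypothesis $M<\lfloor d/6\rfloor$ of Theorem \ref{Theorem_4_2_Global} is met at $M=0$, i.e. that $\lfloor d/6\rfloor\geq 1$, equivalently $d\geq 6$. Because any admissible dense topology satisfies $\lfloor 3N/4\rfloor<d\leq N-1$, this holds for every $N\geq 7$; the only degenerate cases are $N\in\{5,6\}$, where the network is forced to be all-to-all with $d\in\{4,5\}$ and $\lfloor d/6\rfloor=0$, and these are settled directly since an all-to-all network of legitimate oscillators evolves with no disturbing pulse whatsoever. I expect this small edge-case bookkeeping to be the only obstacle; all of the dynamical content is inherited verbatim from Theorem \ref{Theorem_4_2_Global}, so no new synchronization argument is required.
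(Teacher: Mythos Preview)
Your proposal is correct and follows exactly the paper's own approach: the paper simply states that Corollary \ref{Corollary_4_2_Global} is the special case $M=0$ of Theorem \ref{Theorem_4_2_Global} and omits the details. Your write-up is in fact more careful than the paper, since you explicitly verify the small-$N$ edge cases where $\lfloor d/6\rfloor=0$ prevents a direct invocation of the theorem's hypothesis.
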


\begin{proof} 
Corollary \ref{Corollary_4_2_Global} is a special case of Theorem \ref{Theorem_4_2_Global} when the number of attackers $M$ is set to $0$ and hence is omitted.
\end{proof}

\begin{Remark}\label{Remark_4_3_Global}
	According to Theorem \ref{Theorem_4_1_Global} and Theorem \ref{Theorem_4_2_Global}, Mechanism $1$ and Mechanism $2$ guarantee that all legitimate oscillators synchronize with a collective oscillation period $T=2\pi$ seconds (which is equal to the free-running period) even in the presence of Byzantine attacks. This is in distinct difference from existing results where the collective oscillation period is affected by attacks. 
\end{Remark}

\begin{Remark}
	When $N$ is unknown to individual oscillators, $d$ has to be larger than $\lfloor 3N/4\rfloor$, which is greater than $\lfloor 2N/3\rfloor$ for the case where $N$ is known. The requirement of increased connectivity is intuitive in that less knowledge of a PCO network requires stronger connectivity conditions to guarantee synchronization.
\end{Remark}


\section{Simulations}
We considered a network of $N=24$ PCOs placed on a circle with diameter $40$ meters as illustrated in Figure \ref{Topology}. Two oscillators can communicate if and only if their distance is less than $39$ meters. Thus, the degree of the network is $d=20$. We set $t_0=0$ and chose initial phases of oscillators randomly from $[0,\,2\pi]$. 

\begin{figure}[htbp]
	\centering
	\includegraphics[width=0.35\textwidth]{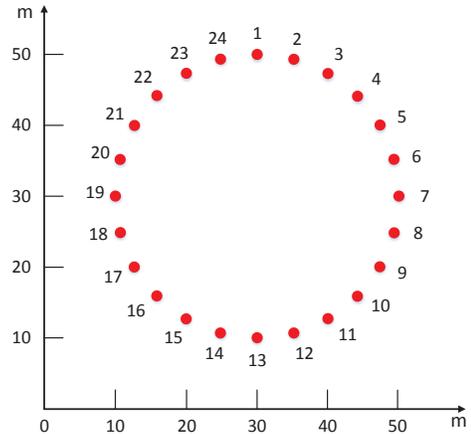}
	\caption{The deployment of the $24$ oscillators used in simulations.}
	\label{Topology}
\end{figure}

\subsection{In the Absence of Attacks}
We first considered the attacker-free case. As $d=20>\lfloor 3N/4\rfloor=18$, we know from Corollary \ref{Corollary_4_1_Global} and Corollary \ref{Corollary_4_2_Global} that the network will always synchronize from any initial phase distribution, whether or not $N$ is available to individual oscillators. This was confirmed by the numerical simulation results in Figure \ref{No_attack}. 

\begin{figure}[htbp]
	\centering
	\includegraphics[width=0.5\textwidth]{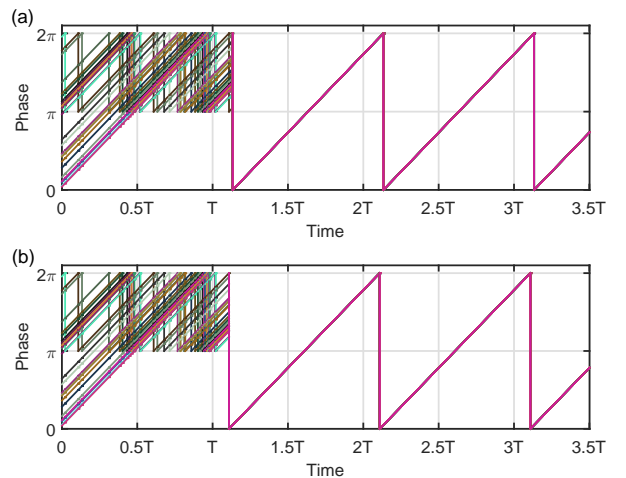}
	\caption{Plot (a) and (b) presented the phase evolutions of the $24$ PCOs under Mechanism $1$ and Mechanism $2$, respectively. $\epsilon$ was set to $0.01T$.}
	\label{No_attack}
\end{figure}

Using the same initial phase distribution as in Figure \ref{No_attack}, we also simulated the phase evolution of PCOs under the pulse-based synchronization mechanism in \cite{yun2015robustness}. It can be seen in Figure \ref{No_attack_optimal} that the pulse-based synchronization mechanism in \cite{yun2015robustness} cannot achieve synchronization, which shows the advantage of our new mechanisms even when attack-resilience is not relevant.

\begin{figure}[htbp]
	\centering
	\includegraphics[width=0.5\textwidth]{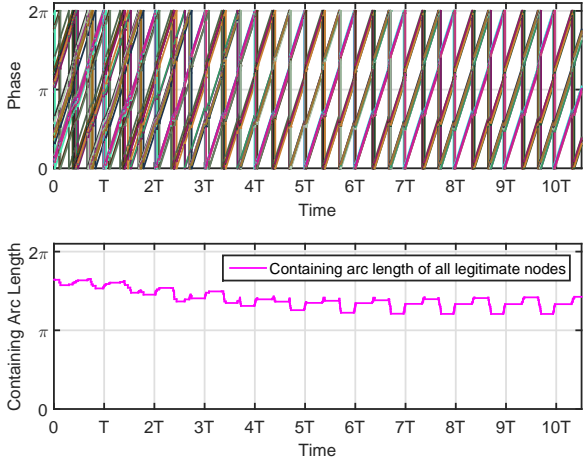}
	\caption{Phase evolution and the length of the containing arc of the $24$ PCOs under the pulse-based synchronization mechanism in \cite{yun2015robustness}. $l$ was set to $0.021$. }
	\label{No_attack_optimal}
\end{figure}

\subsection{In the Presence of Attacks}

Using the same network, we also ran simulations in the presence of Byzantine attacks when $N$ is known to individual oscillators.

We assumed that $3$ out of the $24$ PCOs (oscillators $1$, $8$, and $20$) were compromised and acted as Byzantine attackers. As $3< d-\lfloor 2N/3\rfloor=4$, we know from Theorem \ref{Theorem_4_1_Global} that the network will synchronize. This was confirmed by numerical simulations in Figure \ref{Attack_N_known}, which showed that even under Byzantine attacks the length of the containing arc of legitimate oscillators converged to zero. 

\begin{figure}[htbp]
	\centering
	\includegraphics[width=0.5\textwidth]{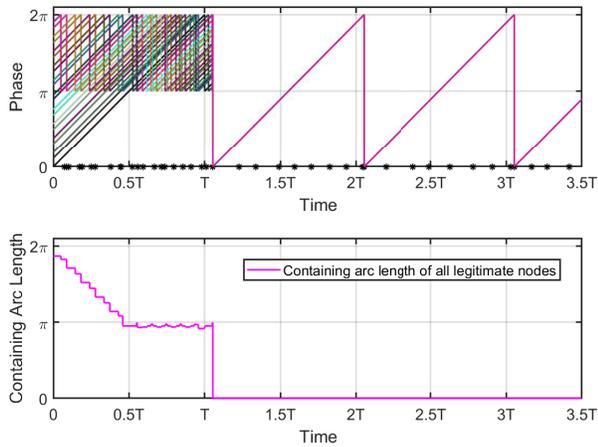}
	\caption{Phase evolution and the length of the containing arc of $21$ legitimate oscillators under Mechanism $1$ in the presence of $3$ Byzantine attackers (oscillators $1$, $8$, and $20$) with attacking pulse time instants represented by asterisks. $\epsilon$ was set to $0.01T$.}
	\label{Attack_N_known}
\end{figure}

Using the same network, when $N$ is unknown to individual oscillators, according to Theorem \ref{Theorem_4_2_Global}, the maximal allowable number of attackers is $\lfloor d/6 \rfloor-1=2$. Hence, the condition in Theorem \ref{Theorem_4_2_Global} was not satisfied. Simulation results confirmed that legitimate oscillators indeed could not synchronize as the collective oscillation period is time-varying and less than $T=2\pi$ seconds, which is illustrated in Figure \ref{Attack_N_unknown_Counter}.

\begin{figure}[htbp]
	\centering
	\includegraphics[width=0.5\textwidth]{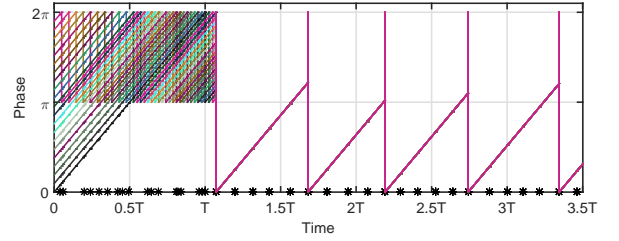}
	\caption{Phase evolution of $21$ legitimate oscillators under Mechanism $2$ in the presence of $3$ attackers (oscillators $1$, $8$, and $20$) with attacking pulse time instants represented by asterisks. $N$ was unknown to individual oscillators and $\epsilon$ was set to $0.01T$.}
	\label{Attack_N_unknown_Counter}
\end{figure}

However, when we decreased the number of attackers to $2$ (oscillators $1$ and $8$), all legitimate oscillators synchronized under Mechanism $2$ (see Figure \ref{Attack_N_unknown}), confirming the results in Theorem \ref{Theorem_4_2_Global}.

\begin{figure}[htbp]
	\centering
	\includegraphics[width=0.5\textwidth]{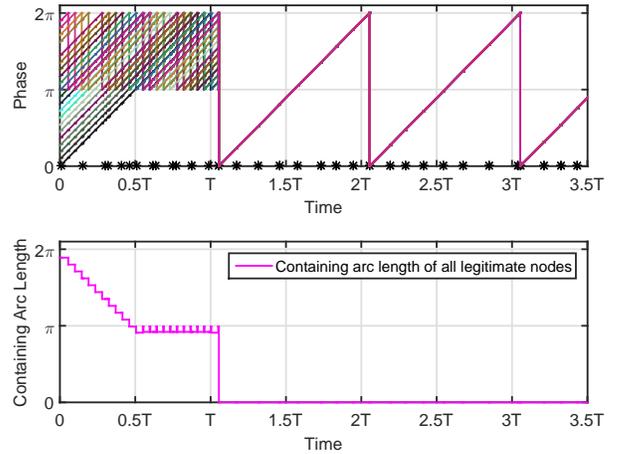}
	\caption{Phase evolution and the length of the containing arc of $22$ legitimate oscillators under Mechanism $2$ in the presence of $2$ attackers (oscillators $1$ and $8$) with attacking pulse time instants represented by asterisks. $N$ was unknown to individual oscillators and $\epsilon$ was set to $0.01T$.}
	\label{Attack_N_unknown}
\end{figure}

\subsection{Comparison with Existing Results}
Under the same PCO network deployment, we also compared our proposed Mechanisms $1$ and $2$ with existing attack resilient pulse-based synchronization approaches in \cite{yun2015robustness,wang2018attack,wang2018pulse,wang2020attack} which solely use content-free pulses in communications. When comparing with \cite{yun2015robustness,wang2018attack,wang2018pulse,wang2020attack}, we did not use the settings in \cite{yun2015robustness,wang2018attack,wang2018pulse,wang2020attack} since they are special cases of our setting, as can be seen in Table $1$.

Figure \ref{Attack_N_known_Compare} showed the evolutions of containing arc length of legitimate oscillators under Mechanism $1$ and approaches in \cite{yun2015robustness,wang2018attack,wang2018pulse,wang2020attack} in the presence of $3$ Byzantine attackers (oscillators $1$, $8$, and $20$) when $N$ was known to individual oscillators. All approaches used the same initial phase distribution (randomly chosen from $[0,\,2\pi]$) and identical malicious pulse attack patterns. It can be seen in Figure \ref{Attack_N_known_Compare} that Mechanism $1$ can achieve perfect synchronization whereas pulse-base synchronization approaches in \cite{yun2015robustness,wang2018attack,wang2018pulse,wang2020attack} failed to achieve synchronization even when the coupling strength was set to $l=1$. It is worth noting that similar results were obtained in all $1,000$ runs of our simulation with the initial phases randomly chosen from $[0,\,2\pi]$ and $40$ attack pulses randomly distributed in $[0,\,3.5T]$.

\begin{figure}[htbp]
	\centering
	\includegraphics[width=0.5\textwidth]{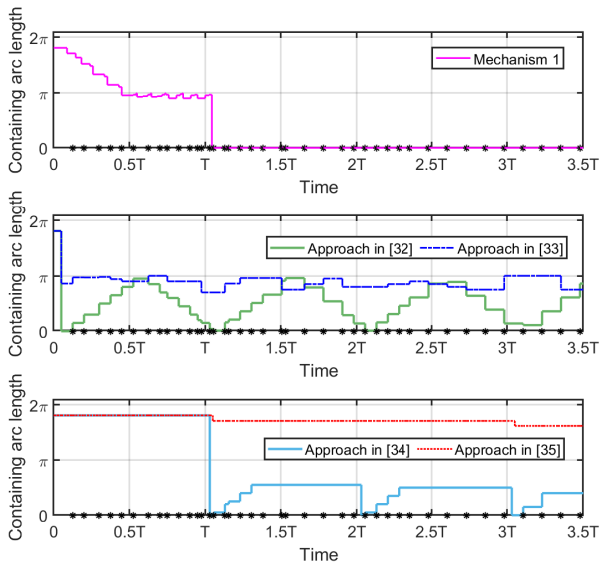}
	\caption{The length of the containing arc of $21$ legitimate oscillators under Mechanism $1$ and approaches in \cite{yun2015robustness,wang2018attack,wang2018pulse,wang2020attack} in the presence of $3$ Byzantine attackers (oscillators $1$, $8$, and $20$). The attack pulse time instants were represented by asterisks. The coupling strength in \cite{yun2015robustness,wang2018attack,wang2018pulse,wang2020attack} was set to $l=1$, $N$ was known to individual oscillators, and $\epsilon$ was set to $0.01T$.}
	\label{Attack_N_known_Compare}
\end{figure}

Figure \ref{Attack_N_unknown_Compare} showed the evolutions of containing arc length of legitimate oscillators under Mechanism $2$ and the approaches in \cite{yun2015robustness,wang2018attack,wang2018pulse,wang2020attack} in the presence of $2$ Byzantine attackers (oscillators $1$ and $8$) when $N$ was unknown to individual oscillators. Under the same set up, it can be seen in Figure \ref{Attack_N_unknown_Compare} that Mechanism $2$ can achieve perfect synchronization whereas existing pulse-base synchronization approaches in \cite{yun2015robustness,wang2018attack,wang2018pulse,wang2020attack} cannot, which confirmed the advantages of our new mechanism. It is worth noting that similar results were obtained in all $1,000$ runs of our simulation with the initial phases randomly chosen from $[0,\,2\pi]$ and $40$ attack pulses randomly distributed in $[0,\,3.5T]$.

\begin{figure}[htbp]
	\centering
	\includegraphics[width=0.5\textwidth]{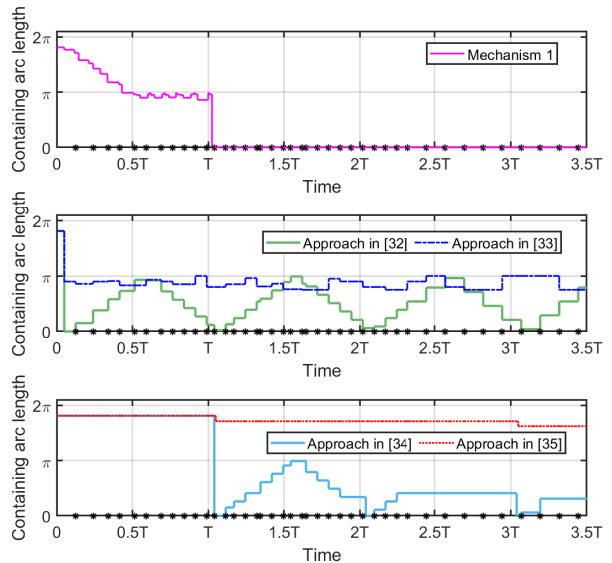}
	\caption{The length of the containing arc of $22$ legitimate oscillators under Mechanism $2$ and approaches in \cite{yun2015robustness,wang2018attack,wang2018pulse,wang2020attack} in the presence of $2$ Byzantine attackers (oscillators $1$ and $8$). The attack pulse time instants were represented by asterisks. The coupling strength in \cite{yun2015robustness,wang2018attack,wang2018pulse,wang2020attack} was set to $l=1$, $N$ was unknown to individual oscillators, and $\epsilon$ was set to $0.01T$.}
	\label{Attack_N_unknown_Compare}
\end{figure}

\section{Conclusions}

Due to unique advantages in simplicity, scalability, and energy efficiency over conventional packet-based synchronization approaches, pulse-based synchronization has been widely studied recently. However, all existing attack resilient pulse-based synchronization results are obtained either under all-to-all coupling topology or restricted initial phase distributions. In this paper, we propose a new pulse-based interaction mechanism to improve the resilience of PCO networks against Byzantine attackers. The new mechanism can enable synchronization in the presence of multiple Byzantine attackers even when the PCO network is not restricted to all-to-all and the initial phases are distributed arbitrarily. This is in distinct difference from most of the existing attack resilience algorithms which require a priori (almost) synchronization among all legitimate oscillators. The approach is also applicable when the total number of oscillators is unknown to individual oscillators. Numerical simulations confirmed the analytical results. In future work, we plan to relax the condition that all legitimate oscillators start at the same time instant and allow different oscillators to be turned on at different time instants.

\bibliographystyle{unsrt}
\bibliography{abbr_bibli}
\end{document}